\newcommand{\U}{{\rm U}}
\newcommand{\GL}{{\rm GL}}
\newcommand{\ot}{\otimes}
\newcommand{\B}{\mathcal{B}}
\newcommand{\om}{\omega}
\newcommand{\lan}{\langle}
\newcommand{\ra}{\rangle}
\newcommand{\one}{{\rm Id}}
\newcommand{\Z}{\mathbb{Z}}
\newcommand{\N}{\mathbb{N}}
\newcommand{\C}{\mathbb{C}}
\newcommand{\R}{\mathbb{R}}
\newcommand{\al}{\alpha}
\newcommand{\alp}{\alpha^\prime}
\newcommand{\ii}{{\mathrm i}}
\newtheorem{theorem}{Theorem}[section]
\newtheorem{lemma}[theorem]{Lemma}
\newtheorem{prop}[theorem]{Proposition}
\theoremstyle{definition}
\newtheorem{example}[theorem]{Example}
\newtheorem{question}[theorem]{Question}
\newtheorem{conj}[theorem]{Conjecture}
\newtheorem{cor}[theorem]{Corollary}
\theoremstyle{remark}
\newtheorem{remark}[theorem]{Remark}
\numberwithin{equation}{section}
\begin{document}
\title[Gaussian $(z,N)$-generalized Yang-Baxter operators]{Parameter-dependent Gaussian $(z,N)$-generalized Yang-Baxter operators}
\author{Eric C. Rowell}
\address{Department of Mathematics\\ Texas A\&M University\\ College Station,
Texas}
\email{rowell@math.tamu.edu}

\thanks{E.C.R. is partially supported by US NSF grant DMS-1108725, and wishes to thank V.F.R. Jones, M.-L. Ge and Z. Wang
for illuminating discussions.  Part of this was written while E.C.R. was visiting BICMR, Peking University and the Chern Institute, Nankai University
and he gratefully acknowledges the hospitality of these Institutions.}
\begin{abstract}We find unitary solutions $\tilde{R}(a)$ to the (multipicative parameter-dependent) $(z,N)$-generalized Yang-Baxter equation that carry the standard measurement basis to $m$-level $N$-partite states that generalize the Bell states corresponding to $\tilde{R}(0)$ in the case $m=N=2$.  This is achieved by a careful study of solutions to the Yang-Baxter equation discovered by Fateev and Zamolodchikov in 1982. \end{abstract}

\maketitle

\section{Introduction}
The four Bell states $|\Phi^{\pm}\ra=\frac{1}{\sqrt{2}}(|00\ra\pm|11\ra)$ and 
$|\Psi^{\pm}\ra=\frac{1}{\sqrt{2}}(|01\ra\pm|10\ra)$ are ubiquitous in quantum information: they are maximally entangled bipartite qubit states that play a starring role in quantum teleportation (i.e. the EPR paradox).  Bell states have been generalized to $m$-level bipartite states \cite{WGC} as well as $2$-level $N$-partite states (starting with $N=3$, see \cite{ghz1}).

The Bell basis change matrix $$B:=\frac{1}{\sqrt{2}}\begin{pmatrix}1 &0&0&1\\0&1&1&0\\0&-1&1&0\\-1&0&0&1\end{pmatrix}$$  describes the relationship between the standard qubit measurement basis and the Bell state basis.  Kauffman and Lomonaco \cite{KL} observed that $B$ satisfies the Yang-Baxter equation.  A natural question is:

\begin{question}
 Can find Yang-Baxter operators that produce $m$-level $N$-partite Bell-like states from the measurement basis?
\end{question}

In \cite{GHZ} the generalized Yang-Baxter equation was introduced and solutions associated with extra-special $2$-groups and GHZ states were explored, answering the question above for $m=2$ and all $N$.  In \cite{GHR} this notion was formalized slightly with a discussion in terms of locality.  We say $R\in\GL_{s^z}(\C)$ is a solution to the \textbf{$(z,N)$-generalized Yang-Baxter equation} ($(z,N)$-gYBE) if 
\begin{equation}\label{gybe}
 (R\ot \one_{s^z})(\one_{s^z}\ot R)(R\ot \one_{s^z})=(\one_{s^z}\ot R)(R\ot \one_{s^z})(\one_{s^z}\ot R)
\end{equation}
where $\one_{s^z}$ is the identity operator on $(\C^s)^{\ot z}$.    A \textbf{$(z,N)$-generalized Yang-Baxter operator} is a solution $R\in\GL_{s^N}(\C)$ to the $(z,N)$-gYBE that also satisfied far-commutivity: 
\begin{equation}\label{farcomgyb} (R\ot \one_{s^z}^{\ot j})(\one_{s^z}^{\ot j}\ot R)=(\one_{s^z}^{\ot j}\ot R)(R\ot \one_{s^z}^{\ot j}) \quad j\geq 2
\end{equation}  
When $z=1$ and $N=2$ we recover the ordinary definition of Yang-Baxter operator, and (\ref{farcomgyb}) is automatically satisfied.  Indeed, this is true whenever $N\leq 2z$.  In the same way that the Bell basis change matrix produces the Bell states, the $(z,N)$-gYB operators in \cite{GHZ} produce $N$-partite GHZ-states.  Moreover, these give rise to representations of the braid group, which plays a central role in the topological model for quantum computation (\cite{FKLW}).
A cascade of papers \cite{GHR,Rebecca,Hong1,Hong2,KW} followed these definitions, yielding new solutions and exploring new applications.

A second way to generalize the Bell basis change matrix is to look for $m^2\times m^2$ braiding matrices that produce $m$-level Bell states, e.g. $\frac{1}{\sqrt{m}}\sum_{j=0}^{m-1}c_j|jj\rangle$.  These \emph{Gaussian} solutions to the Yang-Baxter equation were introduced 25 years ago (at least for $m$ an odd prime) in \cite{jonespjm,GJ}.  In explicit matrix form (see \cite{GR}), these are:
$$R=\frac{1}{\sqrt{m}}\sum_{j=0}^{m-1} \om^{j^2} U^j$$ where $\om$ is either an $m$th or $2m$th root of unity (depending on if $m$ is odd or even, respectively) and $U\in \GL(\C^{m^2})$ is defined by $U(|i\rangle\ot|j\rangle)=\om^{i-j}|i-1\rangle\ot|{j-1}\rangle$ where $\{|i\rangle:0\leq i\leq m-1\}$ is the standard basis for $\C^{m}$. The case $m=2$ is equivalent to the Bell basis change matrix.  Recently, Gaussian Yang-Baxter operators 
have experienced something of a renaissance for their connections to quantum information: they describe particle exchange statistics for \emph{metaplectic anyons} \cite{HNW1,HNW2,CW}.  Metaplectic anyons are modeled by the modular categories $SO(N)_2$, as was shown in \cite{RWe}.

The main goal of this article is to extend the results of \cite{GHZ} to all $m>2$, using the Gaussian Yang-Baxter operators.  To do so there are two critical ingredients: 
\begin{enumerate}
\item $(z,N)$-generalized Yang-Baxter operators with the $z=1$, $N=2$ case giving the Gaussian solutions for all $m$, and the $m=2$ case corresponding to the solutions of \cite{GHZ}, and
\item Baxterized (parameter-dependent) versions of these $(z,N)$-gYB operators on $m$-level arrays.
\end{enumerate}

Historically, solutions to the parameter-dependent Yang-Baxter (or star-triangle) equation came before the parameter-independent $R$-matrix solutions that give rise to braid group representations.  Jones \cite{Bax} discussed the reverse process of (Yang-)\emph{Baxterization}: from an $R$-matrix one introduces a spectral parameter, a process which typically depends on studying the spectrum of $R$ itself.  This was explored in the case $R$ has few eigenvalues in \cite{GWX}, which was employed in \cite{GHZ}.  This allowed an explicit description of the Schr\"odinger equation that controls the unitary evolution of the entangled states.

\begin{remark}
 Seven months after an earlier version of this paper was circulated, the paper \cite{WSWLZX} appeared on the arxiv, which has some overlap with our main results, but with a different approach.  The main differences between the two papers are:
 \begin{enumerate}
  \item We give parameter-dependent solutions to the generalized Yang-Baxter equation that give the Gaussian solutions in the limit.  The possibility of Yang-Baxterization is suggested in \cite{WSWLZX}, but the number of eigenvalues of the braiding matrix grows with $m$, so an explicit Yang-Baxterization would be difficult.  The 
  \item The $(z,N)$-generalized Yang-Baxter operators in \cite{WSWLZX} do not all give rise to braid group representations--they do not derive conditions on $z,N$ that gaurantee all braid relations are satisfied.
 \end{enumerate}

\end{remark}

\section{Gaussian YB operators (with spectral parameters)}

For a parameter $q$ the quantum torus $T_{q^2}(n)$ is defined (see \cite{RWe}) to be the
algebra with invertible generators $u_1,\ldots,u_{n-1}$ satisfying: 
\begin{eqnarray}
 &&u_iu_j=u_ju_i\quad |i-j|\neq 1\label{farcomqt}\\
 &&u_iu_{i+1}=q^2u_{i+1}u_i\quad 1\leq i\leq n-2
\end{eqnarray}
Specializing $q\in\C^*$,
 $T_{q^2}(n)$ may be given a $C*$-structure by setting $u_i^*=u_i^{-1}$.  For $q^2$ a primitive $m$th root of unity on sees that $u_i^{m}$ is in the center of $T_{q^2}(n)$, and we denote by $T_{q^2}^m(n)$ the quotient by the relations $u_i^m=1$.

In \cite{FZ}, Fateev and Zamolodchikov define, for any $m\in\N$ the quantities:
$$x_j(\al):=\prod_{k=0}^{j-1}\frac{\sin(\frac{2k\pi+\al}{2m})}{\sin(\frac{2(k+1)\pi-\al}{2m})}.$$
Clearly some $x_j(\alpha)$ are undefined for certain values of $\alpha$, but these will be explored after a change of variables (see below).

For $q^2$ a primitive $m$th root of unity, \cite{FZ} shows that

$$R^{FZ}_i(\al):=\sum_{j=0}^{m-1} x_n(\al)u_i^j$$ 
satisfies the (additive) parameter-dependent Yang-Baxter equation (\emph{star-triangle relation} in \cite{FZ}):
\begin{equation}\label{ybxy}
R_i(\al)R_{i+1}(\al+\alp)R_i(\alp)=R_{i+1}(\alp)R_{i}(\al+\alp)R_{i+1}(\al).\end{equation}
This is achieved by verifying:

\begin{equation}\label{ybcons}
\begin{split}
 \sum_{\ell=0}^{m-1}&x_{n_1-\ell}(\al)x_{n_2}(\al+\alp)x_{n_3-\ell}(\alp)(q^2)^{-n_3\ell}\\&=\sum_{\ell=0}^{m-1}x_{\ell}(\alp)x_{n_1-n_3}(\al+\alp)x_{\ell-n_2}(\al)(q^2)^{-\ell(n_1-n_3)-n_1n_3}
 \end{split}
\end{equation}

for such $q^2$.  In fact, there is a small typo in \cite[eqn. (10)]{FZ}: in their version of eqn. (\ref{ybcons}) the right-hand side has $\al$ and $\alp$ interchanged.

It is immediate from (\ref{farcomqt}) that:
\begin{equation}\label{farcom}
 R_i^{FZ}(\al)R_j^{FZ}(\alp)=R_i^{FZ}(\alp)R_j^{FZ}(\al)\quad |i-j|\neq 1.
\end{equation}

From the considerations in \cite{FZ}, we have the following parameter-dependent analogue of Proposition 3.6(a)(b) from \cite{GHZ}:
 \begin{prop}\label{L:qtreps} Fix $m\in\N$, and suppose that $q^2$ is a primitive $m$th root of unity. Suppose $T_1,\ldots, T_{n-1}\in\GL(V)$ are operators on $V$ satisfying:
 \begin{enumerate}
 \item[(E1)] $T_i^m=\one_V$
 \item[(E2)] $T_iT_j=T_jT_i$ for $|i-j|\neq 1$
 \item[(E3)] $T_iT_{i+1}=q^2T_{i+1}T_i$.
 \end{enumerate}
Then \begin{enumerate}
\item[(a)] The mapping $\phi: T_{q^2}^m(n)\rightarrow \GL(V)$ via $\phi(u_i)=T_i$ extends to a representation of $T_{q^2}^m(n)$.
\item[(b)] $R_i^\phi(\al):=\sum_{j=0}^{m-1}x_j(\al)T_i^j$ satisfies
eqns. (\ref{ybxy}) and (\ref{farcom}). \end{enumerate}
 \end{prop}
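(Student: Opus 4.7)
My plan is to prove part (a) first and then derive part (b) as an immediate consequence by applying the homomorphism $\phi$ to identities that already hold inside $T_{q^2}^m(n)$.

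For part (a), the strategy is simply to check that the defining relations of $T_{q^2}^m(n)$ are exactly (E1)--(E3). The algebra $T_{q^2}(n)$ is presented by generators $u_1,\ldots,u_{n-1}$ subject to the far-commutativity relation (\ref{farcomqt}) and the $q^2$-commutation relation (2.2); the quotient $T_{q^2}^m(n)$ adjoins $u_i^m = 1$. Thus the assignment $u_i \mapsto T_i$ sends each defining relation to an identity that holds in $\GL(V)$ by hypotheses (E1), (E2), (E3). Since $T_i \in \GL(V)$ is invertible, the map extends to a well-defined algebra homomorphism $\phi\colon T_{q^2}^m(n) \to \End(V)$ landing in $\GL(V)$ on the generators, i.e. a representation of $T_{q^2}^m(n)$.

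For part (b), the key observation is that the Fateev--Zamolodchikov identities (\ref{ybxy}) and (\ref{farcom}) established in \cite{FZ} are \emph{algebraic} identities in the quantum torus $T_{q^2}^m(n)$ itself. Specifically, the verification of the star-triangle relation in \cite{FZ} reduces (by expanding sums and equating coefficients of each monomial in the $u_i$) to the scalar identity (\ref{ybcons}), which depends only on $q^2$ being a primitive $m$th root of unity and on the $u_i$ satisfying the relations of $T_{q^2}^m(n)$. Likewise, the far-commutativity (\ref{farcom}) for the $R^{FZ}_i(\alpha)$ follows directly from (\ref{farcomqt}). Consequently, writing
\[
\tilde R_i(\al) := \sum_{j=0}^{m-1} x_j(\al)\, u_i^j \in T_{q^2}^m(n),
\]
the relations $\tilde R_i(\al)\tilde R_{i+1}(\al+\alp)\tilde R_i(\alp) = \tilde R_{i+1}(\alp)\tilde R_i(\al+\alp)\tilde R_{i+1}(\al)$ and $\tilde R_i(\al)\tilde R_j(\alp) = \tilde R_i(\alp)\tilde R_j(\al)$ for $|i-j|\neq 1$ hold inside $T_{q^2}^m(n)$.

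To conclude, I simply apply the representation $\phi$ from part (a) to both sides of these identities. Since $\phi(u_i^j) = T_i^j$, we have $\phi(\tilde R_i(\al)) = R_i^\phi(\al)$, and $\phi$ is a ring homomorphism, so the identities are preserved, yielding (\ref{ybxy}) and (\ref{farcom}) for the operators $R_i^\phi(\al)$. No step is really an obstacle here: the whole argument is a transport-of-structure once one notices that FZ's verification is intrinsic to the quantum torus quotient rather than tied to a specific realization. The only minor subtlety is ensuring that the sum $\sum_j x_j(\al) u_i^j$ is a legitimate element of (a suitable localization of) $T_{q^2}^m(n)$ for generic $\al$, which is automatic once one restricts to values of $\al$ where all $x_j(\al)$ are defined, as noted after the definition of $x_j(\al)$.
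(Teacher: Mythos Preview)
Your proposal is correct and matches the paper's approach exactly: the paper does not give a formal proof but simply attributes the result to the considerations in \cite{FZ}, i.e., the verification of the scalar identity (\ref{ybcons}) depends only on the quantum torus relations and $q^2$ being a primitive $m$th root of unity, so any representation satisfying (E1)--(E3) inherits (\ref{ybxy}) and (\ref{farcom}) by transport of structure. Your write-up makes this reasoning explicit and is slightly more detailed than the paper itself.
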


 We also wish to address the issue of unitarity.
 For our purposes the  multiplicative parameter-dependent version of (\ref{ybxy}) has some advantages:
\begin{equation}\label{ybxy2}
R_i(a)R_{i+1}(ab)R_i(b)=R_{i+1}(b)R_{i}(ab)R_{i+1}(a).\end{equation}

We reparameterize and rescale $R^{FZ}_i(\alpha)$ as follows:
Set $\al=m\ii\log(1/a)$ (where $\ii=\sqrt{-1}$) and $Q=e^{\pi\ii/m}$ so that:

\begin{equation*} \frac{\sin(\frac{2k\pi+\al}{2m})}{\sin(\frac{2(k+1)\pi-\al}{2m})}=\frac{aQ^k-Q^{-k}}{Q^{k+1}-aQ^{-k-1}}                                                                                                         
\end{equation*}
We then set $$X_j(a):=x_j(m\ii\log(1/a))=\prod_{k=0}^{j-1}\frac{aQ^k-Q^{-k}}{Q^{k+1}-aQ^{-k-1}}.$$  By inspection on sees that the only real singularity occurs at $a=-1$ for $X_{\frac{m}{2}}(a)$ (with $m$ even): the remaining possible singularities for $X_j(a)$ occur at non-real roots of unity $a=Q^{2t}$.  We renormalize  $R_i^{FZ}(m\ii\log(1/a))$ to obtain: $$\tilde{R}_i(a):=\sum_{j=0}^{m-1}\left(\frac{(a+1)(a^m-1)}{m(a-1)(a^m+1)}\right)^{\frac{1}{2}}X_j(a)u_i^j.$$  Setting $\tilde{X}_j(a)=\left(\frac{(a+1)(a^m-1)}{m(a-1)(a^m+1)}\right)^{\frac{1}{2}}X_j(a)$ we note that these quantities are well-defined for all real numbers $a$.  Indeed the order 1 pole of $X_{\frac{m}{2}}(a)$ at $a=-1$ cancels the order 1 zero of $\sqrt{(a+1)(a^m-1)}$ at $a=-1$: we obtain $\tilde{R}_i(-1)=\ii(-u_i)^\frac{m}{2}$ by calculating the limit.

Note that $\tilde{X}_j(a)$ converges in the limits $a\rightarrow \pm\infty$.  On the other hand, $\tilde{X}_j(1)=0$ for $1\leq j\leq m-1$ and $\tilde{X}_0(1)=1$ so that we recover the trivial $R_i=I$ solution.

We can now prove a parameter-dependent version of Proposition 3.6(c) of \cite{GHZ}:
\begin{prop}\label{L:unitary} Keep the hypotheses of Proposition \ref{L:qtreps} and assume that in addition
 the $T_i^{\dag}=T_i^{-1}$ (so $T_i$ are all unitary) and $a\in\R$.  Then:
\begin{enumerate}
\item[(a)] $\phi:T_{q^2}^m(n)\rightarrow U(V)$ and
\item[(b)]  $\tilde{R}_i^\phi(a):=\sum_{j=0}^{m-1}\tilde{X}_j(a)T_i^j\in U(V)$
\end{enumerate}
where $Q=e^{\pi\ii/m}$ as above.
\end{prop}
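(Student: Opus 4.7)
The plan is to dispose of (a) by a direct check and to reduce (b) to a scalar identity via spectral decomposition. For (a), the algebra homomorphism $\phi$ from Proposition \ref{L:qtreps}(a) maps $u_i\mapsto T_i$, and the hypothesis $T_i^\dag=T_i^{-1}$ gives $\phi(u_i^*)=\phi(u_i^{-1})=T_i^{-1}=T_i^\dag=\phi(u_i)^\dag$; thus $\phi$ is a $*$-homomorphism and lands in $U(V)$.

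For (b), the strategy is to diagonalize $T_i$ and reduce to a scalar claim about the $\tilde{X}_j(a)$. Since $T_i$ is unitary and $T_i^m=\one_V$, it is diagonalizable with spectrum contained in the $m$-th roots of unity $\{Q^{2s}:0\le s\le m-1\}$. Writing $T_i=\sum_s Q^{2s}P_s$ for the orthogonal spectral projections, one has $\tilde{R}_i^\phi(a)=\sum_s\mu_s(a)P_s$ with $\mu_s(a):=\sum_{j=0}^{m-1}\tilde{X}_j(a)Q^{2sj}$, so unitarity of $\tilde{R}_i^\phi(a)$ is equivalent to $|\mu_s(a)|=1$ for every $s$ with $P_s\neq 0$. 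As this is a claim purely about the coefficients $\tilde{X}_j(a)$ independent of $\phi$, it suffices to prove $|\mu_s(a)|=1$ for all $s\in\{0,\ldots,m-1\}$ and every real $a$ in the domain.

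Next I would analyze complex conjugation of the coefficients. Conjugating the product in the definition of $X_j(a)$ factor by factor, using $\overline{Q}=Q^{-1}$ and $\bar a=a$, one obtains $\overline{X_j(a)}=X_j(1/a)$ after multiplying numerator and denominator of each factor by $-1$ and then by $a$. Setting $C(a):=\bigl((a+1)(a^m-1)/[m(a-1)(a^m+1)]\bigr)^{1/2}$, a direct calculation gives $C(a)^2=C(1/a)^2$, and a case analysis on signs shows $C(a)^2>0$ for every real $a$ in the domain, so fixing the positive real branch yields $\overline{\tilde{X}_j(a)}=\tilde{X}_j(1/a)$. Consequently,
\[
|\mu_s(a)|^2=C(a)^2\sum_{\ell=0}^{m-1}Q^{2s\ell}\,S_\ell(a),\qquad S_\ell(a):=\sum_{k=0}^{m-1}X_{\ell+k}(a)\,X_k(1/a),
\]
with indices taken mod $m$. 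By invertibility of the discrete Fourier transform on $\Z/m$, the condition $|\mu_s(a)|=1$ for every $s$ is equivalent to the family of scalar identities $S_\ell(a)=\delta_{\ell,0}/C(a)^2$ for $\ell=0,\ldots,m-1$.

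The main obstacle will be verifying these $m$ scalar identities. The case $\ell=0$ is the normalization that motivates the very definition of $C(a)$; the remaining $m-1$ ``off-diagonal'' identities should follow from a telescoping manipulation of the products defining $X_j$, most cleanly by rewriting $X_j(a)$ as a terminating ${}_2\phi_1$-type sum with base $p=Q^2$ and using $\prod_{k=0}^{m-1}(1-ap^k)=1-a^m$, or equivalently by specializing the Fateev--Zamolodchikov identity \eqref{ybcons} in appropriate limits of $\alpha,\alpha'$. Either way, the step reduces to confirming that the scalar $C(a)$ displayed above is precisely the renormalization that turns $R_i^{FZ}$ into a family of unitaries for real spectral parameter.
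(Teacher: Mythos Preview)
Your reduction is essentially the same as the paper's: both arrive at the family of scalar identities $\sum_{n=0}^{m-1}X_n(a)X_{n+j}(1/a)=\delta_{j,0}\,C(a)^{-2}$, which is exactly \eqref{eq:unit}. Whether you reach it via spectral decomposition of $T_i$ or, as the paper does, by directly expanding $\tilde{R}_i^\phi(a)\bigl(\tilde{R}_i^\phi(a)\bigr)^\dag$ in powers of $T_i$, makes no real difference. Your treatment of (a) and of the conjugation identity $\overline{X_j(a)}=X_j(1/a)$ is also fine and matches the paper.

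The gap is that you do not actually prove the off-diagonal identities $S_\ell(a)=0$ for $\ell\neq 0$; you only assert that they ``should follow'' from telescoping, ${}_2\phi_1$ manipulations, or a specialization of \eqref{ybcons}. None of these is obvious, and the paper's argument is quite concrete and different from what you sketch: for fixed $j>0$ one rewrites each summand, using $\prod_{k=0}^{m-1}(a-r^k)=a^m-1$ with $r=Q^2$, as a constant times $P(r^{-n})$, where $P(t)=t^j\prod_{i=1}^{j-1}(a-t^{-1}r^i)\prod_{i=1}^{m-j-1}(a-tr^i)$ is a polynomial whose monomials in $t$ all have degree strictly between $1$ and $m-1$; summing over $n$ then annihilates every monomial because $\sum_{n=0}^{m-1}r^{nk}=0$ for $1\le k\le m-1$. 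This degree-counting step is the heart of the proof and is missing from your proposal. You also do not treat the exceptional point $a=-1$ for $m$ even, where $X_{m/2}(a)$ has a pole; the paper handles this separately by computing the limit $\tilde{R}_i(-1)=\ii(-T_i)^{m/2}$, which is manifestly unitary.
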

\begin{proof} Part (a) has already been proved in \cite{GR}.  From the calculation $\tilde{R}_i(-1)=\ii(-u_i)^\frac{m}{2}$ above, we have $\tilde{R}_i^{\phi}(-1)=\ii(-T_i)^{\frac{m}{2}}$ for $m$ even, which is clearly unitary.

Thus we may assume that either $m$ is odd or $a\neq -1$.  We will work with the un-normalized coefficients $X_j(a)$ and derive the normalization factor $\left(\frac{(a+1)(a^m-1)}{m(a-1)(a^m+1)}\right)^{\frac{1}{2}}$.
 For real values of $a$, we have $\overline{X_j(a)}=X_j(1/a)$ so that (b) follow once we establish:
 
 \begin{equation}\label{eq:unit}
  \sum_{n=0}^{m-1}X_n(a)X_{n+j}(1/a)=\begin{cases} 0 & 0<j\leq m-1\\ \frac{m(a-1)(a^m+1)}{(a+1)(a^m-1)} & j=0.\end{cases}
 \end{equation}
 For $j=0$, we compute $$\sum_{n=0}^{m-1}X_n(a)X_n(1/a)=(a-1)^2\sum_{n=0}^{m-1}\frac{Q^{2n}}{(a-Q^{2n})(aQ^{2n}-1)}.$$  Setting $r=Q^2$ we obtain: $$(a-1)^2\sum_{n=0}^{m-1}\frac{1}{(a-q^n)(a-q^{-n})}=\frac{m(a-1)(a^m+1)}{(a+1)(a^m-1)}$$ giving the claimed normalization factor.  Notice that when $m$ is odd this quantity does not vanish at $a=-1$ since $a^m+1$ appears in the numerator.  It remains to verify (\ref{eq:unit}) for $j>0$, which we compute:
 $$\left( a-1 \right) ^{2}{Q}^{j}\sum _{n=0}^{m-1}  {\frac {{Q}^{2n}
\prod _{i=0}^{j-2}({Q}^{2n+2+2i}-a)}{\prod _{i=0}^{j}(a{Q}^{2n+2i
}-1)}}.$$  Setting $r=Q^2$ as above and removing the factors $(a-1)^2$ and $Q^j$ we obtain:
$$\sum _{n=0}^{m-1} {r}^{n} {\frac { \prod _{i=0
}^{j-2}({r}^{n+1+i}-a)}{\prod _{i=0}^{j}(a{r}^{n+i}-1)}} 
.$$  As in \cite{overflow} we use the fact that $\prod_{i=0}^{m-1}(a-r^k)=(a^m-1)$ to rewrite the summands: $${r}^{n} {\frac { \prod _{i=0
}^{j-2}({r}^{n+1+i}-a)}{\prod _{i=0}^{j}(a{r}^{n+i}-1)}}=\frac{C}{(a^m-1)}r^{-nj}\prod_{i=1}^{j-1}(a-r^{n}r^i)\prod_{i=1}^{m-j-1}(a-r^{-n}r^i),$$
where $C=(-1)^{j+1}r^{-j(j+1)/2}$.  Setting $t=r^{-n}$ the summands are (up to an overall constant) $P(t):=t^j\prod_{i=1}^{j-1}(a-t^{-1}r^i)\prod_{i=1}^{m-j-1}(a-tr^i)$.  We must show that for each $j$, $\sum_{s=0}^{m-1}P(r^{s})=0$.  For this, notice that $P(t)$ is a polynomial in $t$, and each monomial has degree strictly between $1$ and $m-1$.  Thus each coefficient of $a$ in $\sum_{s=0}^{m-1}P(r^{s})$ has a factor of the form $\sum_{n=0}^{m-1}r^{nk}$ where $1\leq k\leq m-1$, which vanishes.
\end{proof}

\begin{example}
  Let us pause to compare this to \cite{GHZ}, i.e. the case $m=2$.  In that paper, Proposition 3.6, relation (E1) is replaced by $T_k^2=-\one_V$ and the condition for unitary is that $T_k^{\dag}=-T_k$, i.e. the $T_k$ are all anti-Hermitian.  If we rescale $T_k$ by $\ii$ then our conditions match.  Moreover, we have $$\tilde{R}_k(a)=\tilde{X}_0(a)\one_V+\tilde{X}_1(a)T_k=\frac{1}{\sqrt{2a^2+2}}[(a+1)\one_V+(1-a)(\ii T_k)],$$ which matches the form of the unitary Yang-Baxterized solution of \cite[eqn. (4.23)]{GHZ} after rescaling $T_k$ by $\ii$.
\end{example}

 \section{parameter-dependent $(z,N)$-generalized Yang-Baxter operators}

 With Propositions \ref{L:qtreps} and \ref{L:unitary} in hand, we may 
mimic the approach of \cite[Theorem 3.21]{GHZ} to obtain local $m^{N+z(n-2)}$-dimensional representations of $T_{q^2}^m(n)$, where $q^2$ is a primitive $m$th root of unity.  That is, we construct unitary matrices $M\in\U(m^N)$ so that $$T_i=\one_{m^z}^{\ot (i-1)}\ot M\ot\one_{m^z}^{\ot (n-i-N+1)}$$ satisfying Proposition \ref{L:qtreps}.

Define generalized Pauli operators on $\C^m$ with basis $[|0\rangle,\ldots,|{m-1}\rangle]$ as 
 $\sigma_x(|i\rangle)=q^{i}|{i-1}\rangle$ and $\sigma_y(|i\rangle)=q^{-i}|{i-1}\rangle$ where $|i\pm m\rangle:=|i\rangle$.   Now define $$M_{m^N}:=q^{\frac{(m-1)(N-2)}{2}}\sigma_x\ot\sigma_y^{\ot N-1}$$ on the vector space $(\C^m)^{\ot N}$.  We have:
 \begin{theorem} The assignment $\psi(u_i)=\one_{m^z}^{\ot i-1}\ot M_{m^N}\ot \one_{m^z}^{\ot n-i-1}$ defines a unitary $m^{N+(z-2)n}$-dimensional representation of $T_{q^2}^m(n)$ provided $\frac{N}{2}\leq z\leq N-1$.
 \end{theorem}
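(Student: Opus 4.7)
The plan is to verify the three hypotheses (E1), (E2), (E3) of Proposition \ref{L:qtreps} for the operators $T_i := \psi(u_i)$, after which the extension to a representation of $T_{q^2}^m(n)$ is immediate; unitarity then follows from Proposition \ref{L:unitary} because $\sigma_x, \sigma_y$ are manifestly unitary and the scalar $q^{(m-1)(N-2)/2}$ has modulus one. Throughout, the geometry is controlled by the observation that $T_i$ is supported on the $N$ consecutive qudit sites at positions $(i-1)z+1, \ldots, (i-1)z+N$ of the tensor product chain.

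For (E1), the key computation is that $\sigma_x^m$ and $\sigma_y^m$ are scalar multiples of the identity. Iterating $\sigma_x$ once through the cyclic basis $\{|0\rangle, \ldots, |m-1\rangle\}$ collects the phases $q^0, q^1, \ldots, q^{m-1}$ whose product is $q^{m(m-1)/2}$, giving $\sigma_x^m = q^{m(m-1)/2}\one_m$; analogously $\sigma_y^m = q^{-m(m-1)/2}\one_m$. Tensoring and combining with the normalization factor in $M_{m^N}$ gives
\[
M_{m^N}^m = q^{m(m-1)(N-2)/2}\cdot\sigma_x^m \otimes (\sigma_y^m)^{\otimes N-1} = q^{m(m-1)[(N-2)+1-(N-1)]/2}\,\one_{m^N} = \one_{m^N},
\]
which explains why the constant $q^{(m-1)(N-2)/2}$ is built into the definition of $M_{m^N}$. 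This phase bookkeeping is the step that requires the most care, and I expect it to be the main obstacle.

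For (E2) and (E3), $T_i$ and $T_{i+k}$ have overlapping supports of size $\max(0, N-kz)$. The hypothesis $z \geq N/2$ forces this overlap to be empty whenever $k \geq 2$, so such $T_i, T_{i+k}$ commute trivially, yielding (E2). The complementary bound $z \leq N-1$ ensures the overlap with $k=1$ is nonempty. A direct inspection of that overlap shows that at its leftmost site (position $z+1$ of $T_i$'s block and position $1$ of $T_{i+1}$'s block) $T_i$ acts by $\sigma_y$ while $T_{i+1}$ acts by $\sigma_x$, and at all other overlap sites both act by $\sigma_y$; the central scalar factors cancel between the two products. The commutation therefore reduces to the single-site identity $\sigma_y \sigma_x = q^2\,\sigma_x \sigma_y$, which one verifies directly on basis vectors, with $q^{2m}=1$ handling the wrap-around case $i=0$. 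This yields $T_i T_{i+1} = q^2\, T_{i+1} T_i$, completing (E3), and Proposition \ref{L:qtreps} then finishes the proof.
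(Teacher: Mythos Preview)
Your proof is correct and follows essentially the same approach as the paper's: you verify (E1)--(E3) via the same computations of $\sigma_x^m,\sigma_y^m$ and the single-site commutator $\sigma_y\sigma_x=q^2\sigma_x\sigma_y$, interpret the bounds $N/2\leq z\leq N-1$ geometrically in terms of support overlap exactly as the paper does, and deduce unitarity from the unitarity of $\sigma_x,\sigma_y$. Your write-up is in fact slightly more explicit than the paper's (you spell out the exponent cancellation in (E1) and the wrap-around via $q^{2m}=1$), but there is no substantive difference in method.
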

 \begin{proof} Clearly $\sigma_x$ and $\sigma_y$ are themselves unitary so the operators $\psi(u_i)$ are also unitary.  Since $\sigma_x^m=q^{\frac{m(m-1)}{2}}\one_{m}$ and $\sigma_y^m=q^{\frac{-m(m-1)}{2}}\one_m$ we have 
 $$(M_{m^N})^m=q^{\frac{m(m-1)(N-2)}{2}}(\sigma_x\ot\sigma_y^{\ot N-1})^m=\one_{m^N}.$$   Next we compute: $\sigma_x\sigma_y=q^{-2}\sigma_y\sigma_x$, so as long as $z\leq N-1$ we have 
 $\psi(u_i)\psi(u_{i+1})=q^2\psi(u_{i+1})\psi(u_i)$.  Indeed, only the $(z+1)$st tensor factors of $\psi(u_i)$ and $\psi(u_{i+1})$ do not commute--they are $\sigma_y$ and $\sigma_x$ respectively, yielding the factor of $q^{2}$.
 
Similarly, the condition $\psi(u_i)\psi(u_j)=\psi(u_j)\psi(u_i)$ for $|i-j|>1$ holds precisely when $2z\geq N$.  Thus we have verified the conditions of Proposition \ref{L:qtreps}.
 \end{proof}
 
In particular we obtain parameter-dependent solutions to the $(N,z)$-generalized Yang-Baxter equation via:

 $$R_i^\psi(\al):=\sum_{j=0}^{m-1}x_j(\al)\one_{m^z}^{\ot i-1}\ot (M_{m^N})^j\ot \one_{m^z}^{\ot n-i-1}$$
 which also satisfy (\ref{farcom}), and are unitary provided $\al\in\ii\R$.

\subsection{Parameter-free $(N,z)$-generalized Yang-Baxter operators}

  Since $\log(1/(ab))=\log(1/a)+\log(1/b)=\frac{\al+\alp}{m\ii}$ the operators:
$$R_i(a):=\sum_{j=0}^{m-1}X_j(a)u_i^j$$ satisfy (\ref{ybxy2}) with $q^2$ a primitive $m$th root of unity.  Consequently, $$\tilde{R}_i:=\tilde{R}_i(0)=\frac{1}{\sqrt{m}}\sum_{j=0}^{m-1}Q^{(mj-j^2)}u_i^j$$
 gives a representation of $\B_n$ into $T_{q^2}^m(n)$ via $\sigma_i\rightarrow \tilde{R}_i$ for any choice of $q^2$ a primitive $m$th root of unity.
Notice that $e^{(mj-j^2)\pi\ii/m}=e^{(m-1)\pi\ii j^2/m}$.  Now $e^{2\pi\ii(m-1)/m}=e^{-2\pi\ii/m}$ is a primitive $m$th root of unity for any $m$, so we may choose $q=e^{\pi\ii (m-1)/m}=-e^{-\pi\ii/m}$ and obtain a representation $\Xi:\B_n\rightarrow T_{q^2}^m(n)$ via $$\Xi(\sigma_i)= S_i:=\frac{1}{\sqrt{m}}\sum_{j=0}^{m-1}q^{j^2}u_i^j.$$  When $m$ is odd, $q$ is a primitive $m$th root of unity, whereas when $m$ is even, $q$ is Galois conjugate to $e^{\pi\ii/m}$ so in either case we may apply a Galois automorphism to recover the Gaussian representation of \cite[Proposition 3.1]{GR}. 
Applying the Proposition \ref{L:unitary} we obtain unitary representations of $\B_n$ via
 $$S_i^\psi:=\frac{1}{\sqrt{m}}\sum_{j=0}^{m-1}q^{j^2}\one_{m^z}^{\ot i-1}\ot (M_{m^N})^j\ot \one_{m^z}^{\ot n-i-1}$$

\section{Conclusions and Discussion}

 The operator $S^\psi=\frac{1}{\sqrt{m}}\sum_{j=0}^{m-1}q^{j^2} (M_{m^N})^j$ carries the standard basis for $(\C^m)^{\ot N}$ to a basis of entangled states.  For a concrete example suppose that $m$ is odd and $q$ is a primitive $m$th root of unity (so that $q^2$ is also a primitive $m$th root of unity).  Then
$$S^\psi |k\rangle^{\ot N}=\frac{1}{\sqrt{m}}\sum_{j=0}^{m-1}q^{c_j(k,m,N)}|j\rangle^{\ot N}$$
where $c_j(k,m,N):=(k-j)^2+\frac{[m-1+(j-k)(j+k+1)](N-2)}{2}$.
For $N=2$ corresponding to the standard Yang-Baxter equation one obtains Gaussian coefficients $c_j(k,m,2)=q^{(k-j)^2}$.  It is clear that one obtains $N$-partite $m$-level generalizations of the Bell states from the other states in the measurement basis.

Regarding $a$ as the time variable, we can consider the unitary evolution of an initial state $\varphi(0)$ via $\tilde{R}(a)\varphi(0)=\varphi(a)$.  Notice that on the interval $0\leq a\leq 1$ the function $\tilde{R}(a)$ interpolates between the Gaussian solution and the trivial solution $Id$, with Gaussian solutions at $\pm\infty$ as well.  (Of course, the only values of $a$ where $\tilde{R}(a)$ satisfies the parameter-free multiplicative Yang-Baxter equation are $a\in\{0,1,\pm\infty\}$ so these are the only values for which we obtain representations of the braid group $\B_n$.)  The Schr\"odinger equation governing this unitary evolution is discussed at length in \cite[Section 4.3]{GHZ}, from which one may derive the time-dependent Hamiltonian.

\thispagestyle{empty}


\begin{thebibliography}{9999}
\bibitem{overflow} I. Bogdanov, \emph{A curious Gauss-Sum type identity}
     eprint,    
http://mathoverflow.net/q/184946, MathOverflow 2014.
 \bibitem{Rebecca} R.\ S.\ Chen, \emph{Generalized Yang-Baxter equations and braiding quantum gates.} J. Knot Theory Ramifications \textbf{21} (2012), no. 9, 1250087.
\bibitem{CW} S. X. Cui; Z. Wang, \emph{Universal Quantum Computation with Metaplectic Anyons} arXiv:1405.7778

 \bibitem{FZ} V.A.\ Fateev; A.\ Zamolodchikov, \emph{Self-dual solutions of the star-triangle relations in $\Z_N$-models.} Phys. Lett. 92A (1982) no. 1, 37--39.
\bibitem{FRW}  J.\ Franko; E.\ C.\ Rowell; Z.\ Wang, Extraspecial 2-groups
and images of braid group
representations, \emph{J. Knot Theory Ramifications} \textbf{15} (2006) no. 4,
1--15.
\bibitem{FKLW} M. Freedman; A. Kitaev; M. Larsen; Z. Wang, Topological quantum computation. Bull.
Amer. Math. Soc. (N.S.) \textbf{40} (2003) no. 1, 3138.
\bibitem{GHR} C. Galindo, S.-M. Hong, E. C. Rowell;
\textit{Generalized and quasi-localizations of braid group representations.}
Int. Math. Res. Not. IMRN \textbf{2013}, no. 3, 693-731. 
\bibitem{GR} C.\ Galindo, E.\ C.\ Rowell; \emph{Braid representations from unitary braided vector spaces.} J. Math. Phys. \textbf{55} (2014) 061702.

\bibitem{GWX} M.L. Ge, K. Xue and Y-S. Wu, \emph{Explicit Trigonometric Yang-Baxterization},
Int. J. Mod. Phys. A6 (1991) 3735.

\bibitem{GJ} Goldschmidt, David M.; Jones V. F. R.,  Metaplectic link
invariants, Geom. Dedicata 131 (1989) no. 2, 165--191.
 \bibitem{ghz1} D.M. Greenberger, M.A. Horne and A. Zeilinger,
 {\it Going beyond Bell's Theorem}, in {\em Bell's Theorem, Quantum
 Theory, and Conceptions of the Universe}, edited by M. Kafatos, pp.
 73--76, (Kluwer Academic, Dordrecht, 1989).
 \bibitem{HNW1} M.\ B.\ Hastings; C.\ Nayak; Z.\ Wang, \emph{On metaplectic modular categories and their applications.} Comm. Math. Phys. \textbf{330} (2014), no. 1, 45â68.
 \bibitem{HNW2} M.\ B.\ Hastings; C.\ Nayak; Z.\ Wang, \emph{Metaplectic anyons, Majorana zero modes, and their computational power} Phys. Rev. B \textbf{87}, (2013) 165421.
 
 
 \bibitem{Hong1} S.-M.\ Hong, F\emph{rom ribbon categories to generalized Yang-Baxter operators and link invariants (after Kitaev and Wang)}. Internat. J. Math. \textbf{24} (2013), no. 1, 1250126
 \bibitem{Hong2} S.-M.\ Hong, \emph{Invariants of links from the generalized Yang-Baxter equation.} J. Knot Theory Ramifications \textbf{22} (2013), no. 10, 1350057
\bibitem{Bax} V. F. R. Jones, \emph{Baxterization}. Differential geometric methods in theoretical physics (Davis, CA, 1988), 511, 
NATO Adv. Sci. Inst. Ser. B Phys., 245, Plenum, New York, 1990. 
\bibitem{jonespjm} Jones, V. F. R., \emph{On knot invariants related to some
statistical mechanical models}, Pacific J. Math. \textbf{137} (1989), no. 2, 311--334.
\emph{
\bibitem{KL} L.H. Kauffman and S.J. Lomonaco Jr., Braiding Operators are Universal
Quantum Gates}, New J. Phys. \textbf{6} (2004) 134
 \bibitem{KW} A.\ Kitaev; Z.\ Wang, \emph{Solutions to generalized Yang-Baxter equations via ribbon fusion categories.} Proceedings of the Freedman Fest, 191â197, Geom. Topol. Monogr., 18, Geom. Topol. Publ., Coventry, 2012. 

\bibitem{GHZ} E.\ C.\ Rowell, Y.\ Zhang, Y.-S.\ Wu and M.-L.\ Ge,
\emph{Extraspecial two-groups, generalized Yang-Baxter equations and
braiding quantum gates.} {Quantum Inf. Comput.} \textbf{10} (2010)
no. 7-8, 0685--0702.
\bibitem{RW} E. C. Rowell; Z.\ Wang, \emph{Localization of unitary braid group
representations}, Comm. Math. Phys. \textbf{311} (2012) no. 3,343--389.

\bibitem{RWe} E.\ C.\ Rowell; H.\ Wenzl \emph{$SO(N)_2$ braid representations are Gaussian} arXiv:1401.5329.
\bibitem{WSWLZX} G.\ Wang, C.\ Sun, C.\ Wu, B.\ Liu, Y.\ Zhang, K.\ Xue, \emph{$d-$level multipartite GHZ bases associated with generalized braid matrices} arXiv:1404.4897.
\bibitem{WGC} A.\ W\'ojcik, A.\ Grudka, R.\ Chhajlany, \emph{Generation of Inequivalent Generalized Bell Bases} Quant. Inf. Proc,
 \textbf{2} (2003), no. 3,201--206.



\end{thebibliography}
\end{document}